\newcommandx{\kam}[2][1=]{\todo[linecolor=ForestGreen!,backgroundcolor=ForestGreen!25,bordercolor=ForestGreen,#1]{[kam] #2}}
\newcommandx{\kmtodo}[1]{\textbf{\textcolor{ForestGreen}{TO-DO: #1}}}
\newcommandx{\edit}[2][1=]{\todo[linecolor=yellow!,backgroundcolor=yellow!25,bordercolor=yellow,#1]{[edit] #2}}
\newcommandx{\note}[2][1=]{\todo[linecolor=red!,backgroundcolor=red!25,bordercolor=red,#1]{[NOTE] #2}}
\let\NAT@parse\undefined
\title{\LARGE \bf
Relative Pose Observability Analysis Using Dual Quaternions
}
\author{Nicholas B. Andrews and Kristi A. Morgansen
\thanks{This work was supported by Blue Origin Enterprises, L.P.}
\thanks{N.B. Andrews is a PhD student with the Department of Aeronautics and Astronautics, University of Washington, Seattle, WA 98195, USA {\tt\small nian6018@uw.edu}.}%
\thanks{K.A. Morgansen is a Professor with the Department of Aeronautics and Astronautics, University of Washington, Seattle, WA 98195, USA {\tt\small morgansn@uw.edu}.}%
}
\begin{document}

\maketitle
\thispagestyle{empty}
\pagestyle{empty}

\begin{abstract}
Relative pose (position and orientation) estimation is an essential component of many robotics applications. Fiducial markers, such as the AprilTag visual fiducial system, yield a relative pose measurement from a single marker detection and provide a powerful tool for pose estimation. In this paper, we perform a Lie algebraic nonlinear observability analysis on a nonlinear dual quaternion system that is composed of a relative pose measurement model and a relative motion model. We prove that many common dual quaternion expressions yield Jacobian matrices with advantageous block structures and rank properties that are beneficial for analysis. We show that using a dual quaternion representation yields an observability matrix with a simple block triangular structure and satisfies the necessary full rank condition.
\end{abstract}

\section{INTRODUCTION}
In robotics applications, such as manipulation and cooperative control, the relative pose (position and orientation), angular velocity, and translational velocity between robots (or more generally, desired coordinate frames representing links, objects, etc.) are necessary for precision feedback control. The motivating scenario for the work is the use of fiducial markers for satellite relative proximity operations. In this use case, precise relative navigation is essential for mission success because measurements from Earth-based sensors may not be able to distinguish between satellites, and communication is often sparse and delayed. While each robot could use knowledge of its inertial pose and the inertial poses of the other relevant coordinate frames to calculate relative poses, tracking the relative poses directly is more convenient and computationally efficient \cite{Funda1990}, especially when working with more than a few coordinate frames simultaneously. Additionally, the measurement types often used in relative motion scenarios are also relative, which promotes the usage of a purely relative system model.

Of the options for relative coordinate representations, dual quaternions offer a compact framework with convenient properties for modeling six-degree-of-freedom systems that have been shown to be the most computationally efficient representation of rigid body transformations \cite{Funda1990}. Dual quaternions have been used to model the relative motions of manipulators and spacecraft \cite{Valverde2018-ri, Tsiotras2020-od}, to perform recursive state estimation \cite{Zivan2022-pw, Filipe2015-ep}, and to develop feedback controllers \cite{Filipe2015-zc, Stanfield2021-bt}. While the use of dual quaternions as a modeling framework is well documented, their application toward estimator design, controller design, and performance analysis is an active research area.

Pose estimation using image data as the sensing mechanism is an extensively researched topic in robotics. Two popular approaches for image-based pose estimation are machine learning methods, which require training on a bank of known objects, and extrinsic calibration, which uses a set of known feature points affixed to the observed object to reconstruct the relative pose \cite{Szeliski2022-gh}. In this paper, we investigate the latter approach with our work specifically inspired by the AprilTag visual fiducial system \cite{Wang2016-ae}. After preliminary camera calibration and using knowledge of the size of the fiducial marker, imaging a single fiducial marker provides the relative pose of the marker with respect to the camera.

Estimating the state of a system from measurements is well-studied in control theory with the viability of the estimation task determined by the system observability. The observability of a system characterizes the ability to uniquely determine the state from a set of measurements. In nonlinear systems, observability can be determined by taking appropriate Lie derivatives and constructing an observability matrix. If the observability matrix is full rank, then the system is locally weakly observable and the initial state can be uniquely determined from measurements \cite{Hermann1977-rt}. In this paper, we consider a nonlinear dual quaternion system that consists of a six degrees of freedom relative motion model and a relative pose measurement model.

While the overarching objective of this paper is a nonlinear observability analysis of the aforementioned system, the primary contribution is the development and demonstration of dual quaternions as an efficient analysis framework. To the best of our knowledge, this is the first time the nonlinear observability of a dual quaternion system has been investigated analytically. 
In related work, the observability of the dual quaternion satellite relative motion model with line-of-sight measurements was empirically determined through Monte Carlo simulations and a Kalman filter covariance analysis \cite{Zivan2022-pw}. Observability conditions for a six-degree-of-freedom system with line-of-sight measurements were proven in \cite{Sun2002-pb} by investigating the rank of the error covariance matrix. In \cite{Li2019-ov}, observability conditions for a satellite relative motion model with relative position measurements and a purely translational state were derived using a Lie algebraic nonlinear observability approach. 


The remaining sections are outlined as follows: in Section \ref{sec:quaternion}, we introduce quaternion and dual quaternion definitions and properties, while highlighting parallels between the two quaternion types. In Section \ref{sec:model}, the dual quaternion relative motion model is derived, and the relative pose measurement model is presented. Nonlinear observability and the observability matrix rank condition are discussed in Section \ref{sec:nlobsv}. In Section \ref{sec:obsv}, we take advantage of the block structure and rank properties of quaternion matrices derived in previous sections to prove that the nonlinear system is locally weakly observable. Lastly, future work is discussed in the conclusion.

\section{QUATERNION FRAMEWORK} \label{sec:quaternion}
In this section, we provide the foundation of the quaternion framework which will be used extensively to derive the relative motion model and prove observability in subsequent sections. Quaternion and dual quaternion definitions and properties are summarized from \cite{Valverde2018-ri,Filipe2015-zc, Sola2017-mp}, and elementary operations are defined in Table \ref{tab:quaternion}.

\begin{table*}
    \vspace{0.2cm}
    \centering
    \makebox[\textwidth][c]{
    \begin{tabular}{|c | c | c|}
        \hline
        \textbf{Operation} & \textbf{Quaternion Definition} & \textbf{Dual Quaternion Definition} \\
        \hline
        Addition & $\qblank{a} + \qblank{b} = \mtx{\qscalar{a} + \qscalar{b}, \ \qvec{a} + \qvec{b}}$ & $\dualqblank{a} + \dualqblank{b} = (\qreal{a} + \qreal{b}) + \dualunit (\qdual{a} + \qdual{b})$ \\
        Scalar Multiplication & $\lambda \qblank{a} = \mtx{\lambda \qscalar{a}, \ \lambda \qvec{a}}$ & $\lambda \dualqblank{a} = (\lambda \qreal{a}) + \dualunit (\lambda \qdual{a})$ \\
        Multiplication & $\qblank{a} \qblank{b} = \mtx{ \qscalar{a} \qscalar{b} - \qvec{a} \qdot \qvec{b}, \ \qscalar{a} \qvec{b} + \qscalar{b} \qvec{a} + \qvec{a} \cross \qvec{b}}$ & $\dualqblank{a} \dualqblank{b} = (\qreal{a} \qreal{b}) + \dualunit (\qdual{a} \qreal{b} + \qreal{a} \qdual{b})$ \\
        Conjugate & $\qconj{\qblank{a}} = \mtx{\qscalar{a}, \ -\qvec{a}}$ & $\qconj{\dualqblank{a}} = (\qconj{\qreal{a}}) + \dualunit (\qconj{\qdual{a}})$ \\
        Dot Product & $\qblank{a} \qdot \qblank{b} = \mtx{\qscalar{a} \qscalar{b} + \qvec{a} \qdot \qvec{b}, \ \zeros{3}{1} }$ & $\dualqblank{a} \qdot \dualqblank{b} = (\qreal{a} \qdot \qreal{b}) + \dualunit (\qdual{a} \qdot \qreal{b} + \qreal{a} \qdot \qdual{b})$ \\
        Cross Product & $\qblank{a} \cross \qblank{b} = \mtx{0, \ \qscalar{a} \qvec{b} + \qscalar{b} \qvec{a} + \qvec{a} \cross \qvec{b}}$ & $\dualqblank{a} \cross \dualqblank{b} = (\qreal{a} \cross \qreal{b}) + \dualunit (\qdual{a} \cross \qreal{b} + \qreal{a} \cross \qdual{b})$ \\
        Norm & $\norm{\qblank{a}}^2 = \qblank{a} \qdot \qblank{a} = \qconj{\qblank{a}} \qblank{a}$ & $\norm{\dualqblank{a}}^2 = (\qreal{a} \qdot \qreal{a} + \qdual{a} \qdot \qdual{a}) + \dualunit 0$ \\
        Swap & Undefined & $\swap{\dualqblank{a}} = \qdual{a} + \dualunit \qreal{a}$ \\
        Real & Undefined &  $\qreal{(\dualqblank{a})} = \qreal{\qblank{a}}$ \\
        Dual & Undefined &  $\qdual{(\dualqblank{a})} = \qdual{\qblank{a}}$ \\
        \hline
    \end{tabular}
    }
    \caption{Quaternion and dual quaternion operations \cite{Filipe2015-zc, Stanfield2021-bt}.}
    \label{tab:quaternion}
    \vspace{-1cm}
\end{table*}

\subsection{Quaternion}
A quaternion is defined as:
\begin{align}
    \qblank{q} = \qscalar{q} + \qblank{q_1} i + \qblank{q_2} j + \qblank{q_3} k.
\end{align}
where $i^2 = j^2 = k^2 = -1$, $i = jk = -kj$, $j = ki = -ik$, $k = ij = -ji$, and $ijk = -1$. A quaternion is composed of a scalar part, $\qscalar{q} \in \reals{}$, and vector part, $\qvec{q} =  {\left[ \qblank{q_1}, \qblank{q_2}, \qblank{q_3} \right]} \in \reals{3}$, succinctly expressed as the ordered pair $\qblank{q} = {\left[ \qscalar{q}, \qvec{q} \right]}$. In some literature, the scalar component is the last entry in the quaternion array, however, for this paper the scalar component is always the first entry. The set of quaternions, scalar quaternions, and vector quaternions are defined as $\quats = \{\qblank{q} \st  \qscalar{q} + \qblank{q_1} i + \qblank{q_2} j + \qblank{q_3} k, \qscalar{q} \in \reals{}, \qblank{q_1} \in \reals{}, \qblank{q_2} \in \reals{}, \qblank{q_3} \in \reals{} \}$, $\quatss = \{\qblank{q} \in \quats \st \qblank{q_1} = \qblank{q_2} = \qblank{q_3} = 0 \}$, and $\quatsv = \{\qblank{q} \in \quats \st \qscalar{q} = 0 \}$.

The relative orientation of a frame $\cf{x}$ with respect to a frame $\cf{y}$ is represented by the \emph{unit quaternion}, $\q{x}{y} \in \quatsu$. The set of unit quaternions is defined as $\quatsu = \{ \quats \st  \qconj{\qblank{q}} \qblank{q} = \qblank{q} \qconj{\qblank{q}} = \qblank{q} \qdot \qblank{q} = \qone \}$, where $\qone = { \left[1, \zeros{3}{1} \right]} \in \quatss$. The primary advantage of the unit quaternion attitude representation compared to Euler angles or alternative representations is that quaternions are singularity-free. A unit quaternion also has the following inverse properties:
\begin{gather}
    \label{eq:unitq}
    \inv{\q{x}{y}} = \qconj{\q{x}{y}} = \q{y}{x}.
\end{gather}
A unit quaternion can be expressed as a rotation angle, $\phi$, about a unit vector, $\ivec{n}$:
\begin{align} \label{eq:qdef}
    \q{x}{y} = {\left[ \cos \left( \frac{\phi}{2} \right), \ivec{n} \sin \left( \frac{\phi}{2} \right) \right]}.
\end{align}

Unit quaternions have a convenient form for transforming a vector between coordinate frames. By representing a vector $\qvecf{v}{x} \in \reals{3}$ in frame $\cf{x}$ coordinates as a vector quaternion $\qblankf{v}{x} = { \left[ 0, \qvecf{v}{x} \right] } \in \quatsv$, a coordinate transformation to and from the $\cf{y}$ frame has the form
\begin{gather} \label{eq:quatrot}
    \qblankf{v}{y} = \qconj{\q{y}{x}} \qblankf{v}{x} \q{y}{x}, \quad
    \qblankf{v}{x} = \q{y}{x} \qblankf{v}{y} \qconj{\q{y}{x}}.
\end{gather}
Additionally, unit quaternions can be chained together to solve for the total relative rotation between multiple reference frames: $\q{x}{y} = \qconj{\q{y}{z}} \q{x}{z}$.

The unit quaternion kinematic equations are
\begin{align}
    \label{eq:qkin}
    \dq{x}{y} = \frac{1}{2} \q{x}{y} \qomega{x}{y}{x} = \frac{1}{2} \qomega{x}{y}{y} \q{x}{y},
\end{align}
where $\omeg{x}{y}{x} \in \reals{3}$ is the angular velocity of $\cf{x}$ relative to $\cf{y}$ expressed in $\cf{x}$ coordinates, and $\qomega{x}{y}{x} = { \left[0, \omeg{x}{y}{x} \right]} \in \quatsv$.

\subsubsection{Vector Form}
Multiplication and conjugation of quaternions can be rewritten in a linear algebraic form allowing for easier manipulation and use of matrix calculus tools for quaternion calculus. A quaternion $\qblank{q} \in \quats$ left multiplied by a matrix $M \in \reals{4 \times 4}$ follows the standard matrix multiplication algebra:
\begin{gather}
    {M} = \mtx{{M}_{11} & {M}_{12} \\ {M}_{21} & {M}_{22}} \in \reals{4 \times 4} \nonumber \\
    {M}_{11} \in \reals{}, {M}_{12} \in \reals{1 \times 3}, {M}_{21} \in \reals{3 \times 1}, {M}_{22} \in \reals{3 \times 3} \\
    {M} \qblank{q} = \mtx{{M}_{11} \qscalar{q} + {M}_{12} \qvec{q}, \ {M}_{21} \qscalar{q} + {M}_{22} \qvec{q}} \in \quats. \nonumber
\end{gather}

The left and right quaternion multiplication matrices for a quaternion $\qblank{q} \in \quats$ are:
\begin{gather}
    \skewmat{q} = \mtx{0 & -\qblank{q}_3 & \qblank{q}_2 \\
    \qblank{q}_3 & 0 & -\qblank{q}_1 \\
    -\qblank{q}_2 & \qblank{q}_1 & 0} \in \reals{3 \times 3} \\
    \qleft{q} = \qscalar{q} \eye{4} + \mtx{0 & -\tpose{\qvec{q}} \\
    \qvec{q} & \skewmat{q}} \in \reals{4 \times 4} \\
    \qright{q} = \qscalar{q} \eye{4} + \mtx{0 & -\tpose{\qvec{q}} \\
    \qvec{q} & -\skewmat{q}} \in \reals{4 \times 4}
\end{gather}
where $\eye{n}$ is the $n \times n$ identity matrix. Quaternion multiplication can then be re-written as
\begin{gather}
    \qblank{a} \qblank{b} = \qleft{a} \qblank{b} = \qright{b} \qblank{a} \\
    \qblank{a} \qblank{b} \qblank{c} = \qleft{\qblank{a} \qblank{b}} \qblank{c} = \qright{\qblank{b} \qblank{c}} \qblank{a} = \left( \qright{\qblank{b}} \qblank{a} \right) \qblank{c}
\end{gather}
for $\qblank{a}, \qblank{b}, \qblank{c} \in \quats$.

The quaternion conjugate distributes like the matrix transpose and can be deconstructed using the conjugate matrix $\eyeconj$:
\begin{gather}
    \eyeconj = \diag{1, -1, -1, -1} \\
    \qconj{\qblank{a}} = \eyeconj \qblank{a}, \quad
    \qconj{\left( \qblank{a} \qblank{b} \right)} = \qconj{\qblank{b}} \qconj{\qblank{a}}.
\end{gather}


\subsubsection{Derivatives}
By rewriting quaternion expressions in their linear algebraic form, taking their derivatives becomes much simpler because matrix calculus tools can be readily applied. Some common quaternion derivatives that will be used to construct dual quaternion derivatives in the subsequent subsection are:
\begin{gather}
    \pd{\norm{\qblank{q}}^2}{\qblank{q}} = 2 \qblank{q}, \quad
    \pd{\qblank{a} \qblank{b}}{\qblank{a}} = \qright{b}, \quad
    \pd{\qblank{a} \qblank{b}}{\qblank{b}} = \qleft{a} \\
    \pd{\qconj{\qblank{a}} \qblank{b} \qblank{a}}{\qblank{a}} = \qleft{\qconj{\qblank{a}} \qblank{b}} + \qright{\qblank{b} \qblank{a}} \eyeconj.
\end{gather}

\subsection{Dual Quaternion}
Dual quaternions are an extension of quaternions and provide a convenient and natural form for modeling the relative pose and velocities between coordinate frames. Similar to how a complex number is composed of a real and imaginary part, a dual quaternion is formed by a real part, $\qreal{q} \in \quats$, and a dual part, $\qdual{q} \in \quats$. The dual unit, $\dualunit$, has the properties $\dualunit^2 = 0$ and $\dualunit \neq 0$. The set of dual quaternions, scalar dual quaternions, and vector dual quaternions are defined as $\dquats = \{\dualqblank{q} \st \dualqblank{q} = \qreal{q} + \dualunit \qdual{q}, \qreal{q} \in \quats, \qdual{q} \in \quats \}$, $\dquatss = \{\dualqblank{q} \st \dualqblank{q} = \qreal{q} + \dualunit \qdual{q}, \qreal{q} \in \quatss, \qdual{q} \in \quatss \}$, and $\dquatsv = \{\dualqblank{q} \st \dualqblank{q} = \qreal{q} + \dualunit \qdual{q}, \qreal{q} \in \quatsv, \qdual{q} \in \quatsv \}$.

The pose of a frame $\cf{x}$ with respect to a frame $\cf{y}$ is represented by the \emph{dual pose}, $\dualq{x}{y} \in \dquatsu$, and is defined as
\begin{align}
    \dualq{x}{y} &= \q{x}{y} + \dualunit \frac{1}{2} \qpos{x}{y}{y} \q{x}{y} \\
    &= \q{x}{y} + \dualunit \frac{1}{2} \q{x}{y} \qpos{x}{y}{x} 
\end{align}
where $\qpos{x}{y}{x} = { \left[ 0, \pos{x}{y}{x} \right] } \in \quatsv$, $\pos{x}{y}{x} \in \reals{3}$ is the position of $\cf{x}$ relative to $\cf{y}$ expressed in $\cf{x}$ coordinates, and $\q{x}{y} \in \quatsu$ is the orientation of the $\cf{x}$ frame with respect to the $\cf{y}$ frame. The dual pose belongs to the set of \emph{unit dual quaternions} which are defined as $\dquatsu = \{\dualqblank{q} \in \dquats \st \qconj{\dualqblank{q}} \dualqblank{q} = \dualqblank{q} \qconj{\dualqblank{q}} = \dualqblank{q} \qdot \dualqblank{q} = \dualone\}$, where $\dualone = \qone + \dualunit \qzero \in \dquatss$ and $\qzero = { \left[0, \zeros{3}{1}\right]} \in \quatss$.

Many parallels exist between quaternions and dual quaternions in terms of properties and expression forms for changing coordinate frames, kinematics, and derivatives. Like the unit quaternion properties in \eqref{eq:unitq}, unit dual quaternions have similar inverse properties:
\begin{gather}
    \inv{\dualq{x}{y}} = \qconj{\dualq{x}{y}} = \dualq{y}{x}.
\end{gather}
Changing coordinate frames between vector dual quaternions $\dualqblankf{\omega}{x}, \dualqblankf{\omega}{y} \in \dquatsv$ also has a form similar to vector quaternions:
\begin{gather}
    \dualqblankf{\omega}{y} = \qconj{\dualq{y}{x}} \dualqblankf{\omega}{x} \dualq{y}{x}, \quad
    \dualqblankf{\omega}{x} = \dualq{y}{x} \dualqblankf{\omega}{y} \qconj{\dualq{y}{x}}.
\end{gather}
Additionally, unit dual quaternions can be chained together over intermediate coordinate frames to solve for the total relative transformation between frames: $\dualq{x}{y} = \qconj{\dualq{y}{z}} \dualq{x}{z}$.

The \emph{dual velocity} is a vector dual quaternion and embeds the relative rotational and translational velocities. It has the form
\begin{align}
    \dualomega{x}{y}{z} = \qomega{x}{y}{z} + \dualunit (\qvel{x}{y}{z} + \qomega{x}{y}{z} \cross \qpos{z}{x}{z}) \in \dquatsv,
\end{align}
where $\qomega{x}{y}{z} = { \left[ 0, \omeg{x}{y}{z} \right] } \in \quatsv$, $\omeg{x}{y}{z} \in \reals{3}$ is the angular velocity of $\cf{x}$ relative to $\cf{y}$ expressed in $\cf{z}$ coordinates,  $\qvel{x}{y}{z} = { \left[ 0, \vel{x}{y}{z} \right] } \in \quatsv$, and $\vel{x}{y}{z} \in \reals{3}$ is the translational velocity of $\cf{x}$ relative to $\cf{y}$ expressed in $\cf{z}$ coordinates. Calculating relative dual velocities is done similarly to vectors: $\dualomega{x}{y}{z} = \dualomega{x}{w}{z} - \dualomega{y}{w}{z}.$

The dual quaternion kinematics in \eqref{eq:dqkin} have a similar form to the quaternion kinematics in \eqref{eq:qkin}. Despite the familiar-looking form, it is important to remember that the dual quaternion kinematics capture the kinematics of the full pose:
\begin{align}
    \label{eq:dqkin}
    \ddualq{x}{y} = \frac{1}{2} \dualq{x}{y} \dualomega{x}{y}{x} = \frac{1}{2} \dualomega{x}{y}{y} \dualq{x}{y}.
\end{align}

\subsubsection{Vector Form}
Dual quaternion operations can also be expressed in a linear algebraic vector form, where a dual quaternion $\dualqblank{q} \in \dquats$ left multiplied by a matrix follows the standard matrix multiplication algebra:
\begin{gather}
    \hat{M} = \mtx{\hat{M}_{11} & \hat{M}_{12} \\ \hat{M}_{21} & \hat{M}_{22}} \in \reals{8 \times 8} \nonumber \\
    \hat{M}_{11}, \hat{M}_{12}, \hat{M}_{21}, \hat{M}_{22} \in \reals{4 \times 4} \\
    \hat{M} \dualqblank{q} = (\hat{M}_{11} \qreal{q} + \hat{M}_{12} \qdual{q}) + \dualunit (\hat{M}_{21} \qreal{q} + \hat{M}_{22} \qdual{q}) \in \dquats. \nonumber
\end{gather}

For a dual quaternion $\dualqblank{q} \in \dquats$, the left and right dual quaternion multiplication matrices in $\reals{8 \times 8}$ are:
\begin{gather}
    \qleft{\dualqblank{q}} = \mtx{\qleft{\qreal{\qblank{q}}} & 0 \\
    \qleft{\qdual{\qblank{q}}} & \qleft{\qreal{\qblank{q}}}}, \quad
    \qright{\dualqblank{q}} = \mtx{\qright{\qreal{\qblank{q}}} & 0 \\
    \qright{\qdual{\qblank{q}}} & \qright{\qreal{\qblank{q}}}} \label{eq:dqleftright}.
\end{gather}
Dual quaternion multiplication can be expressed in linear algebraic form as
\begin{gather}
    \dualqblank{a} \dualqblank{b} = \qleft{a} \dualqblank{b} = \qright{b} \dualqblank{a} \\
    \dualqblank{a} \dualqblank{b} \dualqblank{c} = \qleft{\dualqblank{a} \dualqblank{b}} \dualqblank{c} = \qright{\dualqblank{b} \dualqblank{c}} \dualqblank{a} = \left( \qright{\dualqblank{b}} \dualqblank{a} \right) \dualqblank{c}
\end{gather}
for dual quaternions $\dualqblank{a}, \dualqblank{b}, \dualqblank{c} \in \dquats$.

Lastly, the dual quaternion conjugate can also be deconstructed using the dual conjugate matrix $\dualeyeconj$,
\begin{gather}
    \dualeyeconj = \blkdiag{\eyeconj, \eyeconj} \\
    \qconj{\dualqblank{a}} = \eyeconj \dualqblank{a}, \quad
    \qconj{\left( \dualqblank{a} \dualqblank{b} \right)} = \qconj{\dualqblank{b}} \qconj{\dualqblank{a}}.
\end{gather}

\subsubsection{Derivatives}
By rewriting dual quaternion operations in a matrix form, derivatives become simpler to calculate and can be written in a compact form that mirrors quaternion derivatives. Some common derivatives that will be used in the observability analysis in Section \ref{sec:obsv} are:
\begin{gather}
    \pd{\norm{\dualqblank{q}}^2}{\dualqblank{q}} = 2 \dualqblank{q}, \quad
    \pd{\dualqblank{a} \dualqblank{b}}{\dualqblank{a}} = \qright{\dualqblank{b}}, \quad
    \pd{\dualqblank{a} \dualqblank{b}}{\dualqblank{b}} = \qleft{\dualqblank{a}} \\
    \pd{\qconj{\dualqblank{a}} \dualqblank{b} \dualqblank{a}}{\dualqblank{a}} = \qleft{\qconj{\dualqblank{a}} \dualqblank{b}} + \qright{\dualqblank{b} \dualqblank{a}} \dualeyeconj.
\end{gather}

\section{STATE-SPACE MODEL} \label{sec:model}
In this section, the dual quaternion relative motion and measurement models are presented. First, the relative state definition is introduced and then the relative rigid body dynamics are derived in the following subsection. The relative dynamics are derived by first discussing how to represent external forces and torques as a dual quaternion and then by using the rigid body dynamics with respect to an inertial frame to derive the relative dynamics. In the last subsection, the dual quaternion measurement model for a single fiducial marker relative pose measurement is presented.

\subsection{State Definition}
The dual quaternion relative motion model tracks the pose and velocities of the target coordinate frame, $\target$, with respect to the chaser/camera coordinate frame, $\camera$. The state, $x \in \reals{16}$, is composed of the dual pose and dual velocity quaternions of $\target$ with respect to $\camera$ in $\camera$ coordinates:
\begin{align}
    x &= \mtx{\dualq{\target}{\camera} \\ \dualomega{\target}{\camera}{\camera}}
    = \mtx{\q{\target}{\camera} \\ 
            \frac{1}{2} \qpos{\target}{\camera}{\camera} \q{\target}{\camera} \\
            \qomega{\target}{\camera}{\camera} \\
            \qvel{\target}{\camera}{\camera} + \qomega{\target}{\camera}{\camera} \cross \qpos{\camera}{\target}{\camera}}.
\end{align}

\subsection{Rigid Body Dynamics}
The vector dual quaternion force, $\dualforce{\camera} \in \dquatsv$, is the net external forces ($\force{\camera} \in \reals{3}$) and torques ($\torque{\camera} \in \reals{3}$) in $\cf{\camera}$ frame coordinates
\begin{gather}
    \dualforce{\camera} = \qforce{\camera} + \dualunit \qtorque{\camera} \\
    \qforce{\camera} = { \left[ 0, \force{\camera}\right] } \in \quatsv, \quad \qtorque{\camera} = { \left[ 0, \torque{\camera} \right] } \in \quatsv.
\end{gather}
The mass matrix, $\mass{\camera}$, contains the mass ($m$) and inertia ($\inertia \in \reals{3 \times 3}$) properties of the system in $\cf{\camera}$ frame coordinates
\begin{align}
    \mass{\camera} = \blkdiag{1, m \eye{3}, 1, \inertia}.
\end{align}
Note that calculating the external forces and torques may require knowledge of the pose of the camera and/or target with respect to an inertial frame. For a specific example of modeling dual quaternion forces and torques acting on a spacecraft please refer to \cite{Filipe2015-zc}.


The equations of motion for the camera with respect to the inertial frame, $\inertial$, in dual quaternion form are \cite{Tsiotras2020-od}
\begin{align}
    \ddualq{\camera}{\inertial} &= \frac{1}{2} \dualq{\camera}{\inertial} \dualomega{\camera}{\inertial}{\camera} \\
    \ddualomega{\camera}{\inertial}{\camera} &= \swap{ \left( \inv{(\mass{\camera})}  (\dualforce{\camera} - \dualomega{\camera}{\inertial}{\camera} \cross \mass{\camera} \swap{(\dualomega{\camera}{\inertial}{\camera})}) \right)}. \label{eq:ci_dynamics}
\end{align}
The target pose and velocity with respect to the inertial frame has the same form
\begin{align}
    \ddualq{\target}{\inertial} &= \frac{1}{2} \dualq{\target}{\inertial} \dualomega{\target}{\inertial}{\target} \\
    \ddualomega{\target}{\inertial}{\target} &= \swap{ \left( \inv{(\mass{\target})}  (\dualforce{\target} - \dualomega{\target}{\inertial}{\target} \cross \mass{\target} \swap{(\dualomega{\target}{\inertial}{\target})}) \right) }.
\end{align}

We will now derive the relative dynamics. First, use the fact that $\dualomega{\target}{\camera}{\camera} = \dualomega{\target}{\inertial}{\camera} - \dualomega{\camera}{\inertial}{\camera}$, and then differentiate with respect to time and apply the dual quaternion Transport Theorem \cite{Filipe2013-bj}:
\begin{align}
    \ddualomega{\target}{\camera}{\camera} &= \ddualomega{\target}{\inertial}{\camera} - \ddualomega{\camera}{\inertial}{\camera} \\
    \ddualomega{\target}{\inertial}{\camera} &= \dualq{\target}{\camera} \ddualomega{\target}{\inertial}{\target} \qconj{\dualq{\target}{\camera}} + \dualomega{\target}{\camera}{\camera} \cross \dualomega{\target}{\inertial}{\camera}.
\end{align}
Now substitute $\ddualomega{\camera}{\inertial}{\camera}$ from \eqref{eq:ci_dynamics}
to obtain the complete relative pose equations of motion:
\begin{align}
    \ddualq{\target}{\camera} &= \frac{1}{2}  \dualomega{\target}{\camera}{\camera} \dualq{\target}{\camera} \label{eq:kinematics} \\
    \ddualomega{\target}{\camera}{\camera} &= \dualq{\target}{\camera} \ddualomega{\target}{\inertial}{\target} \qconj{\dualq{\target}{\camera}} + \dualomega{\target}{\camera}{\camera} \cross \dualomega{\target}{\inertial}{\camera} \label{eq:dynamics} \\
    &- \swap{\left( \inv{(\mass{\camera})}  (\dualforce{\camera} - \dualomega{\camera}{\inertial}{\camera} \cross \mass{\camera} \swap{(\dualomega{\camera}{\inertial}{\camera})}) \right)} . \nonumber
\end{align}

\subsection{Measurement Model}
The measurement model is based on the AprilTag visual fiducial system \cite{Wang2016-ae}. The detection of a single fiducial marker gives the relative pose of the marker, $\marker$, with respect to the camera. The pose of the marker with respect to the target is assumed to be known and is represented as the dual pose
\begin{align}
    \dualq{\marker}{\target} &= \mtx{\q{\marker}{\target} \\ \frac{1}{2} \qpos{\marker}{\target}{\target} \q{\marker}{\target}}.
\end{align}
Then the measurement of a single marker is the dual pose quaternion, $\dualq{\marker}{\camera}$, where
\begin{align}
    y &= h(x) = \dualq{\marker}{\camera} = \dualq{\target}{\camera} \dualq{\marker}{\target}. \label{eq:meas}
\end{align}

\section{NONLINEAR OBSERVABILITY} \label{sec:nlobsv}
While there is a single definition of observability for linear systems, with nonlinear systems there are multiple degrees of observability, and we must define exactly what kind of observability is being considered. This section is a brief review of nonlinear observability and the Lie derivative approach for determining the observability of nonlinear systems. The definitions of the various classes of observability in this section are summarized from \cite{Hermann1977-rt, Powel2015, Zhou2008-me, Nijmeijer1990}.

Consider the nonlinear system, $\Sigma$, with motion and measurement models
\begin{gather}
    \Sigma: \quad \dot{x} = f(x, u), \quad
    y = h(x),
\end{gather}
where $x(t) \in \reals{n}$, \ $u(t) \in \mathcal{U} \subseteq \reals{m}$, and $\mathcal{U}$ is the set of permissible controls. Let the solution to the initial value problem for $\Sigma$ for $x(0) = x_0$ with the control input $u(t)$ be $x(t, x_0, u)$, and let $y(t,x_0, u) = h(x(t, x_0, u))$. 

Points $x_0, x_1 \in \reals{n}$ are \emph{indistinguishable} if for every control $u \in \mathcal{U}$, then $y(t, x_0, u) = y(t, x_1, u)$ for all $t$. The system $\Sigma$ is \emph{weakly observable at $x_0$} if there exists an open neighborhood $U$ of $x_0$ such that if $x_1 \in U$ and $x_0$ and $x_1$ are indistinguishable, then $x_0 = x_1$. $\Sigma$ is \emph{weakly observable} if $\Sigma$ is weakly observable at all $x$.

The points $x_0$ and $x_1$ are \emph{$U$-indistinguishable} if for every control, $u \in \mathcal{U}$, with trajectories $x(t, x_0, u)$ and $x(t, x_1, u)$ that lie in $U \subseteq \reals{n}$ for $t \in \left[ 0, T \right]$, then $y(t, x_0, u) = y(t, x_1, u)$ for all $t \in \left[0, T \right]$. $\Sigma$ is \emph{locally weakly observable at $x_0$} if there exists an open neighborhood $U$ of $x_0$ such that for every open neighborhood $V \subset U$ of $x_0$, $x_0$ and $x_1$ $V$-indistinguishable implies that $x_0 = x_1$, and $\Sigma$ is \emph{locally weakly observable} if $\Sigma$ is locally weakly observable at all $x$.

Local weak observability is a stronger definition and implies weak observability. Weak observability at $x_0$ implies that $x_0$ can be eventually distinguished from its neighbors for some control, but may require traveling far away from the initial condition. Local weak observability implies that $x_0$ can be distinguished from its neighbors in finite time and space. While observability can be a global property, both definitions only consider observability with respect to some neighborhood of an initial point.

An analytical approach to testing the observability of a nonlinear system can be derived from differential geometry and provides a rank criterion for determining local weak observability. The zeroth through second-order \emph{Lie derivatives} of the function $h(x)$ with respect to a vector field $f(x)$ are
\begin{align}
    \lie{h}{f}{0} &= h(x) \\
    \lie{h}{f}{1} &= \grad \lie{h}{f}{0} \ f(x) = \grad h(x) \ f(x) \\
    \lie{h}{f}{2} &= \grad \lie{h}{f}{1} \ f(x) = \lie{\left( \lie{h}{f}{1} \right)}{f}{1}.
\end{align}
If $h(x)$ is a scalar function, then $\grad h(x)$ is the gradient expressed as a row vector. If $h(x)$ is a vector function, then $\grad h(x)$ is the Jacobian matrix.

Higher-order Lie derivatives have a similar form and can be written in terms of lower-order Lie derivatives. Note that if the system is control affine ($\dot{x} = f_0(x) + \sum_{i = 1}^m f_i(x) u_i$) or has multiple measurements, then it is possible to take mixed Lie derivatives with respect to $f_0(x)$ and $f_i(x)$, and with respect to each measurement independently. Since the system considered in this work does not have a control input and only one measurement, this special case of the Lie derivatives will not be presented here but additional information and examples can be found in \cite{Zhou2008-me, Mirzaei2008-kl}.

The observability matrix is defined with rows
\begin{gather}
    \obsv{} = \left\{ \grad \lie{h}{f}{n} \ | \ n \in \mathbb{N} \right\}
\end{gather}
and is used to determine the observability of the nonlinear system through the following rank condition.
\begin{obsvrank}
    If the observability matrix of the nonlinear system is full rank, then the system is locally weakly observable.
\end{obsvrank}
There is no systematic way to construct the observability matrix, however, in practice, taking sequential Lie derivatives along well-chosen combinations of the motion model functions typically yields promising results. If any combination of candidate Lie derivatives of arbitrary degree satisfies the rank criteria, the system is locally weakly observable.

\section{OBSERVABILITY ANALYSIS} \label{sec:obsv}
Let $\minstate \in \reals{13}$ be the reduced state representation of $x$:
\begin{align}
    \minstate = \tpose{\mtx{\q{\target}{\camera} & \pos{\target}{\camera}{\camera} & \omeg{\target}{\camera}{\camera} & \vel{\target}{\camera}{\camera}}}.
\end{align}
The dual quaternion state, $x$, is a convenient embedding for $\minstate$, but for decision-making and practical purposes, we are actually concerned with the observability of $\minstate$.
\begin{theorem}
    If $x$ is observable, then $\minstate$ is observable.
\end{theorem}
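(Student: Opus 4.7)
The plan is to exhibit a smooth bijection $\phi$ from the manifold of admissible $x$ states to the manifold of admissible $\minstate$ states and then transfer observability across it by the contrapositive. Every component of $\minstate$ can be read off from $x$ algebraically: $\q{\target}{\camera}$ is already the first block of $x$; the relative position is extracted from the dual part of the dual pose via $\qpos{\target}{\camera}{\camera} = 2\,\bigl(\frac{1}{2}\qpos{\target}{\camera}{\camera}\q{\target}{\camera}\bigr)\,\qconj{\q{\target}{\camera}}$, using the unit quaternion identity $\qconj{\q{\target}{\camera}}\q{\target}{\camera} = \qone$; $\omeg{\target}{\camera}{\camera}$ is the vector part of $\qomega{\target}{\camera}{\camera}$; and $\vel{\target}{\camera}{\camera}$ is recovered from the fourth block of $x$ by subtracting $\qomega{\target}{\camera}{\camera} \cross \qpos{\camera}{\target}{\camera}$, where $\qpos{\camera}{\target}{\camera} = -\qpos{\target}{\camera}{\camera}$ has just been computed. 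Each of these operations is polynomial in the entries of $x$, so $\phi$ is smooth.

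Next I would check the inverse: given $\minstate$, rebuild $x$ block by block by forming vector quaternions from $\pos{\target}{\camera}{\camera}$, $\omeg{\target}{\camera}{\camera}$, and $\vel{\target}{\camera}{\camera}$ and assembling the dual pose and dual velocity from their defining formulas. This inverse is smooth for the same reason, so $\phi$ is a diffeomorphism between the twelve-dimensional constraint manifolds of admissible $x$ and admissible $\minstate$ (unit norm on $\q{\target}{\camera}$, vector-quaternion form on the remaining blocks). Because $y = h(x)$ depends only on $x$, the composition $\tilde h := h \circ \phi^{-1}$ is a well-defined smooth measurement model for $\minstate$, and the $\minstate$ trajectory corresponding to $x$ produces the same output history.

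Observability then transfers. Assuming for contradiction that $\minstate$ is not locally weakly observable at some $\minstate_0$, every neighborhood contains a distinct $\minstate_1$ indistinguishable from $\minstate_0$. Setting $x_i = \phi^{-1}(\minstate_i)$, injectivity of $\phi^{-1}$ gives $x_0 \neq x_1$, continuity places $x_1$ in any prescribed neighborhood of $x_0$, and $\tilde h(\minstate_i(t)) = h(x_i(t))$ forces $x_0$ and $x_1$ to be indistinguishable in the $x$ representation, contradicting the assumed observability of $x$. The main obstacle is not conceptual but purely bookkeeping: one must confirm that $\phi$ respects the quaternion structural constraints in both directions, so that it is genuinely a bijection between the correct twelve-dimensional manifolds and not merely an ill-defined map between ambient Euclidean spaces of different dimensions $16$ and $13$. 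Once that routine check is in hand, the observability transfer follows immediately.
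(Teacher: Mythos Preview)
Your argument is correct and follows the same idea as the paper's proof: both rely on the existence of a bijection between the $x$ and $\minstate$ representations to transfer observability from one to the other. The paper's proof is essentially a one-line version of yours, simply asserting the bijection $x = g(\minstate)$ and its inverse without the explicit construction, smoothness verification, or contrapositive reasoning that you supply.
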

\begin{proof}
    There is a bijective mapping $x = g(\minstate)$ with inverse mapping $\minstate = g^{-1}(x)$. Therefore, if $x$ is observable then $\minstate$ can be uniquely determined through the inverse mapping.
\end{proof}

In the remainder of this section, a series of lemmas and corollaries regarding triangular and block triangular matrices are presented and are then used to prove observability.
Triangular and block triangular matrices regularly appear when working with dual quaternions as in \eqref{eq:dqleftright}.
\begin{lemma}
     The eigenvalues of a triangular matrix are its diagonal entries \cite{Hefferon2020-answers}.
\end{lemma}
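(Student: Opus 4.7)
The plan is to prove the claim via the characteristic polynomial, which is the most direct route. First I would fix a triangular matrix $T \in \reals{n \times n}$ (the argument is symmetric for the upper and lower cases, so I would pick one, say lower triangular, without loss of generality) and observe that $T - \lambda \eye{n}$ is itself lower triangular with diagonal entries $t_{ii} - \lambda$.

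The key ingredient is the auxiliary fact that the determinant of a triangular matrix equals the product of its diagonal entries. I would establish this by induction on $n$: for $n = 1$ it is immediate, and for the inductive step I would expand $\det(T - \lambda \eye{n})$ along the first row (for lower triangular, only $t_{11} - \lambda$ contributes), leaving an $(n-1) \times (n-1)$ lower triangular minor whose determinant is $\prod_{i=2}^{n}(t_{ii} - \lambda)$ by hypothesis. Combining, the characteristic polynomial is
\begin{align}
    \det(T - \lambda \eye{n}) = \prod_{i=1}^{n} (t_{ii} - \lambda).
\end{align}
Since the eigenvalues of $T$ are by definition the roots of this polynomial, they are precisely the diagonal entries $t_{11}, \ldots, t_{nn}$, counted with multiplicity.

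There is no real obstacle here: the argument rests only on the multiplicativity of cofactor expansion and induction, and the result is a standard textbook fact already cited to \cite{Hefferon2020-answers}. Given that the statement is invoked only as a stepping stone toward the rank analysis of the block triangular multiplication matrices in \eqref{eq:dqleftright}, I expect the paper's treatment to be at most a one-line appeal to the reference rather than a fully worked derivation.
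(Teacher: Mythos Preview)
Your argument is correct and is the standard characteristic-polynomial route to this textbook fact. As you anticipated, the paper does not supply a proof at all: the lemma is stated with a bare citation to \cite{Hefferon2020-answers} and immediately followed by its corollary, so there is nothing further to compare.
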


\begin{corollary}
    A proper triangular matrix (non-zero entries on the diagonal) is full rank.
\end{corollary}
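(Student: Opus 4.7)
The plan is to invoke the immediately preceding Lemma, which identifies the eigenvalues of a triangular matrix with its diagonal entries, and then combine this with the standard fact that a square matrix is full rank if and only if zero is not one of its eigenvalues.

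First I would note that ``triangular'' presupposes a square matrix, so the matrix in question is $n \times n$ for some $n$. Next, applying the preceding Lemma, the multiset of eigenvalues coincides with the multiset of diagonal entries. Since the hypothesis asserts that every diagonal entry is nonzero, zero does not appear in the spectrum. Equivalently---and perhaps more directly---the determinant of a triangular matrix is the product of its diagonal entries, which is nonzero by assumption, so the matrix is invertible and therefore has full rank $n$.

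I do not anticipate any real obstacle: the corollary is essentially a one-line consequence of the preceding Lemma together with the equivalence between invertibility and having only nonzero eigenvalues. The only subtle point worth flagging is that triangularity implicitly restricts us to square matrices, which is what allows ``full rank'' to be identified with invertibility and justifies the eigenvalue/determinant reasoning. If desired, the same argument extends to block triangular matrices (as will be needed later for the dual quaternion multiplication matrices in \eqref{eq:dqleftright}) by replacing ``diagonal entries'' with ``diagonal blocks'' and ``nonzero'' with ``invertible,'' since the determinant of a block triangular matrix factors as the product of the determinants of its diagonal blocks.
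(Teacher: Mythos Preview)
Your proposal is correct and is exactly the intended argument: the paper states this corollary without proof, as an immediate consequence of the preceding Lemma (eigenvalues of a triangular matrix are its diagonal entries), so your eigenvalue/determinant reasoning is precisely what the reader is meant to supply. Your remark about the block-triangular extension is also apt, since the paper records that as the next lemma (Lemma~\ref{lem:blockdiag}) and uses it repeatedly in the observability proof.
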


\begin{lemma} \label{lem:blockdiag}
    A block triangular square matrix is full rank if its square diagonal blocks are full rank \cite{blockrank}.
\end{lemma}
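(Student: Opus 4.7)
The plan is to reduce the claim to the determinant criterion for invertibility: a square matrix is full rank if and only if its determinant is nonzero. The single ingredient I need is the block triangular determinant identity
\begin{equation*}
\det(M) = \prod_{i=1}^{k} \det(A_{ii}),
\end{equation*}
where $A_{11}, A_{22}, \ldots, A_{kk}$ are the square diagonal blocks of the block triangular matrix $M$. Once this identity is in hand, the conclusion is immediate: if each $A_{ii}$ is full rank then $\det(A_{ii}) \neq 0$, so the product is nonzero and $M$ is invertible, hence full rank.

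To establish the determinantal identity I would proceed by induction on the number of diagonal blocks $k$. Without loss of generality, take $M$ to be block upper triangular; the lower triangular case (which is the form that appears in \eqref{eq:dqleftright}) follows since transposition preserves both determinant and rank. For the base case $k=2$, partition $M$ so that the top-left block $A_{11}$ is $n_1 \times n_1$ and the bottom-left block of size $(n - n_1) \times n_1$ is zero. Expanding by cofactors along the first column removes one row and column of $A_{11}$ while preserving the block upper triangular structure of what remains; iterating this expansion $n_1$ times peels off the factor $\det(A_{11})$ and leaves exactly $\det(A_{22})$. The inductive step then follows by grouping the last $k-1$ diagonal blocks into a single square block, applying the $k=2$ result, and invoking the inductive hypothesis on the grouped block.

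The only real bookkeeping concern is verifying that each cofactor expansion preserves the zero structure below the diagonal so that the argument can be iterated — straightforward but slightly tedious to write out in full. No conceptual obstacle arises, which is consistent with the fact that the paper cites the result directly to \cite{blockrank} rather than proving it in line.
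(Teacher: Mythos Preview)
Your argument is correct: the block triangular determinant identity $\det(M)=\prod_i\det(A_{ii})$ together with the determinant criterion for invertibility gives the result, and your induction-by-cofactor sketch is a standard way to establish that identity. There is nothing to compare against, however, because the paper does not supply its own proof of Lemma~\ref{lem:blockdiag}; the statement is simply asserted with a citation to \cite{blockrank}, so your write-up already goes beyond what the paper provides.
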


Lastly, we will prove that the rank of left and right quaternion and dual quaternion multiplication matrices are full rank.
\begin{lemma} \label{lem:qlr}
    The matrices $\qleft{\qblank{q}}$ and $\qright{\qblank{q}}$ are full rank if $q$ is a unit quaternion.
\end{lemma}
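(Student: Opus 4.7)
The plan is to construct an explicit inverse for each matrix by exploiting the fact that left and right quaternion multiplication matrices are, respectively, a homomorphism and an anti-homomorphism of quaternion multiplication. The triangular-matrix machinery in the preceding lemmas does not apply directly, because $\qleft{\qblank{q}}$ and $\qright{\qblank{q}}$ are not triangular, so a different route is needed.

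First, I would establish the composition identities
\begin{gather}
\qleft{\qblank{a}\qblank{b}} = \qleft{\qblank{a}}\qleft{\qblank{b}}, \qquad
\qright{\qblank{a}\qblank{b}} = \qright{\qblank{b}}\qright{\qblank{a}},
\end{gather}
both of which follow directly from the triple-product identities already stated in the quaternion framework section. Specifically, $\qblank{a}\qblank{b}\qblank{c} = \qleft{\qblank{a}\qblank{b}}\qblank{c} = \qleft{\qblank{a}}\qleft{\qblank{b}}\qblank{c}$ holds for every $\qblank{c} \in \quats$, from which the left identity is read off; the right identity follows by the analogous manipulation of $\qblank{a}\qblank{b}\qblank{c} = \qright{\qblank{b}\qblank{c}}\qblank{a}$.

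Next, since $\qblank{q}$ is a unit quaternion, the inverse property \eqref{eq:unitq} gives $\qconj{\qblank{q}}\qblank{q} = \qblank{q}\qconj{\qblank{q}} = \qone$, and by definition $\qleft{\qone} = \qright{\qone} = \eye{4}$. Combining these facts with the composition identities,
\begin{gather}
\qleft{\qconj{\qblank{q}}}\qleft{\qblank{q}} = \qleft{\qconj{\qblank{q}}\qblank{q}} = \eye{4}, \qquad
\qleft{\qblank{q}}\qleft{\qconj{\qblank{q}}} = \qleft{\qblank{q}\qconj{\qblank{q}}} = \eye{4},
\end{gather}
so $\qleft{\qblank{q}}$ is invertible with $\inv{\qleft{\qblank{q}}} = \qleft{\qconj{\qblank{q}}}$, and in particular full rank. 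Applying the identical argument with right multiplication in place of left yields $\inv{\qright{\qblank{q}}} = \qright{\qconj{\qblank{q}}}$, completing the proof.

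There is no real obstacle here; the argument is a few lines once the (anti-)homomorphism property is in hand. The main conceptual point is that this property converts the quaternion-level inverse $\qconj{\qblank{q}}$ directly into a matrix-level inverse, bypassing any coordinate computation with the block structure of $\qleft{\qblank{q}}$ and $\qright{\qblank{q}}$.
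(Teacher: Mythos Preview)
Your argument is correct, but it takes a genuinely different route from the paper. The paper computes the determinant of $\qleft{\qblank{q}}$ by brute force, obtaining
\[
\det\!\left(\qleft{\qblank{q}}\right) = (q_0^2 + q_1^2 + q_2^2 + q_3^2)^2 = 1
\]
for a unit quaternion, and then remarks that $\det\!\left(\qright{\qblank{q}}\right)$ coincides with this expression. You instead exploit the multiplicativity of $\qleft{\cdot}$ (and anti-multiplicativity of $\qright{\cdot}$) together with $\qconj{\qblank{q}}\qblank{q} = \qone$ to exhibit $\qleft{\qconj{\qblank{q}}}$ and $\qright{\qconj{\qblank{q}}}$ as explicit matrix inverses. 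Your approach is shorter, avoids expanding a $4\times 4$ determinant, and yields the inverse matrices for free; it also transfers verbatim to the dual-quaternion case, so Corollary~\ref{cor:dualqlr} would follow by the identical two lines rather than by appealing to the block-triangular structure. The paper's computation, on the other hand, gives the precise determinant value and shows in passing that $\qleft{\qblank{q}}$ is nonsingular for \emph{any} nonzero quaternion, not just unit ones---though your argument extends there too, since a nonzero quaternion has inverse $\qconj{\qblank{q}}/\norm{\qblank{q}}^2$.
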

\begin{proof}
    If $\qblank{q}$ is a unit quaternion, then from the unit quaternion definition, its entries satisfy $q_0^2 + q_1^2 + q_2^2 + q_3^2 = 1$. The determinant and rank are related through the fact that a matrix $A$ is full rank if $\det(A) \neq 0$ \cite{Hefferon2020}. The determinant of $\qleft{\qblank{q}}$ is:
    \begin{align}
        \det \left( \qleft{\qblank{q}} \right) &= q_{0}^{4} + 2 q_{0}^{2} q_{1}^{2} + 2 q_{0}^{2} q_{2}^{2} + 2 q_{0}^{2} q_{3}^{2} + q_{1}^{4} + \\
        & 2 q_{1}^{2} q_{2}^{2} + 2 q_{1}^{2} q_{3}^{2} + q_{2}^{4} + 2 q_{2}^{2} q_{3}^{2} + q_{3}^{4} \nonumber \\
        &= (q_0^2 + q_1^2 + q_2^2 + q_3^2)^2 = 1.
    \end{align}
    Note that $\det \left( \qleft{\qblank{q}} \right) = \det \left( \qright{\qblank{q}} \right)$, and therefore $\qright{\qblank{q}}$ is also full rank.
\end{proof}

\begin{corollary} \label{cor:dualqlr}
    The matrices $\qleft{\dualqblank{q}}$ and $\qright{\dualqblank{q}}$ are full rank if $\dualqblank{q}$ is a unit dual quaternion.
\end{corollary}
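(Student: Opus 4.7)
The plan is to exploit the block lower triangular form of $\qleft{\dualqblank{q}}$ and $\qright{\dualqblank{q}}$ displayed in \eqref{eq:dqleftright}, and combine it with Lemma \ref{lem:qlr} and Lemma \ref{lem:blockdiag}.

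First I would show that if $\dualqblank{q}$ is a unit dual quaternion, then its real part $\qreal{q}$ is itself a unit quaternion. Expanding the defining condition $\qconj{\dualqblank{q}} \dualqblank{q} = \dualone$ using the dual quaternion multiplication and conjugation rules from Table \ref{tab:quaternion}, one obtains
\begin{align}
\qconj{\qreal{q}} \qreal{q} + \dualunit \bigl( \qconj{\qdual{q}} \qreal{q} + \qconj{\qreal{q}} \qdual{q} \bigr) = \qone + \dualunit \qzero,
\end{align}
and equating the non-dual part gives $\qconj{\qreal{q}} \qreal{q} = \qone$, so $\qreal{q} \in \quatsu$.

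Next I would apply Lemma \ref{lem:qlr} to conclude that both $4 \times 4$ matrices $\qleft{\qreal{q}}$ and $\qright{\qreal{q}}$ are full rank. The explicit expressions in \eqref{eq:dqleftright} show that $\qleft{\dualqblank{q}}$ is an $8 \times 8$ block lower triangular matrix whose two diagonal blocks are both equal to $\qleft{\qreal{q}}$, and similarly $\qright{\dualqblank{q}}$ is block lower triangular with diagonal blocks $\qright{\qreal{q}}$. Lemma \ref{lem:blockdiag} then immediately yields that $\qleft{\dualqblank{q}}$ and $\qright{\dualqblank{q}}$ are full rank.

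The only mildly delicate step is the first one, extracting that the real part of a unit dual quaternion is a unit quaternion from the algebraic definition; after that, everything is a direct invocation of previously established results, with no new computation required.
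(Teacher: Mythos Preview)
Your proposal is correct and follows the same approach as the paper: use the block lower triangular structure in \eqref{eq:dqleftright}, invoke Lemma~\ref{lem:qlr} on the diagonal blocks, and conclude via Lemma~\ref{lem:blockdiag}. You are in fact slightly more careful than the paper, which applies Lemma~\ref{lem:qlr} to the diagonal blocks without explicitly verifying that $\qreal{q}$ is a unit quaternion; your extraction of $\qconj{\qreal{q}}\qreal{q}=\qone$ from the real part of $\qconj{\dualqblank{q}}\dualqblank{q}=\dualone$ fills that small gap.
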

\begin{proof}
    It was shown in Lemma \ref{lem:qlr} that the matrices $\qleft{\qblank{q}}$ and $\qright{\qblank{q}}$ are full rank. It then follows from Lemma \ref{lem:blockdiag} that $\qleft{\dualqblank{q}}$ and $\qright{\dualqblank{q}}$ are full rank because they are block triangular with diagonal blocks that are square and full rank.
\end{proof}

We will now prove the observability of the dual quaternion relative motion system \eqref{eq:kinematics} and \eqref{eq:dynamics}, with a single fiducial marker relative pose measurement \eqref{eq:meas} by deriving the necessary zeroth and first-order Lie derivatives and constructing the observability matrix.

\subsection*{Zeroth Order Lie Derivatives}
\vspace{-0.6cm}
\begin{align}
    \lie{h}{f}{0} &= h(x) = \dualq{\marker}{\camera} = \dualq{\target}{\camera} \dualq{\marker}{\target} \\
    \jacobian \lie{h}{f}{0} &= \mtx{\pd{\lie{h}{f}{0}}{\dualq{\target}{\camera}} & \pd{\lie{h}{f}{0}}{\dualomega{\target}{\camera}{\camera}}} = \mtx{\qright{\dualq{\marker}{\target}} & \zeros{8}{8}}
\end{align}

\subsection*{First Order Lie Derivatives}
\vspace{-0.6cm}
\begin{align}
    \lie{h}{f}{1} &= \jacobian \lie{h}{f}{0} \ f(x) \\
    &= \mtx{\qright{\dualq{\marker}{\target}} & \zeros{8}{8}} \mtx{\frac{1}{2}  \dualomega{\target}{\camera}{\camera} \dualq{\target}{\camera} \\ \ddualomega{\target}{\camera}{\camera}} \\
    &= \frac{1}{2} \qright{\dualq{\marker}{\target}} \left(\dualomega{\target}{\camera}{\camera} \dualq{\target}{\camera}\right) \\
    &= \frac{1}{2} \dualomega{\target}{\camera}{\camera} \dualq{\target}{\camera} \dualq{\marker}{\target} \\
    \jacobian \lie{h}{f}{1} &= \mtx{\frac{1}{2} \qleft{\dualomega{\target}{\camera}{\camera}} \qright{\dualq{\marker}{\target}} & \frac{1}{2} \qright{\dualq{\target}{\camera} \dualq{\marker}{\target}}}
\end{align}

The Lie derivatives are concatenated to construct the observability matrix:
\begin{align}
    \obsv{} &= \mtx{\jacobian \lie{h}{f}{0} \\ \jacobian \lie{h}{f}{1}} \in \reals{16 \times 16} \\ 
    &= \mtx{\qright{\dualq{\marker}{\target}} & \zeros{8}{8} \\
    \frac{1}{2} \qleft{\dualomega{\target}{\camera}{\camera}} \qright{\dualq{\marker}{\target}} & \frac{1}{2} \qright{\dualq{\target}{\camera} \dualq{\marker}{\target}}}.
\end{align}
\begin{theorem}
    The observability matrix, $\obsv{}$, is full rank, and the system is observable with a single fiducial marker.
\end{theorem}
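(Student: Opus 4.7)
The plan is to exploit the block lower triangular structure of $\obsv{}$ and reduce the full-rank question to facts already established in the paper, namely Lemma \ref{lem:blockdiag} and Corollary \ref{cor:dualqlr}. Inspecting the matrix, the upper-right block is $\zeros{8}{8}$, so $\obsv{}$ is block lower triangular with diagonal blocks $\qright{\dualq{\marker}{\target}}$ and $\tfrac{1}{2}\qright{\dualq{\target}{\camera}\dualq{\marker}{\target}}$. By Lemma \ref{lem:blockdiag}, it suffices to show each of these $8 \times 8$ diagonal blocks is full rank.

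Next, I would verify that both blocks are right-multiplication matrices of \emph{unit} dual quaternions, so that Corollary \ref{cor:dualqlr} applies. The first block involves $\dualq{\marker}{\target}$, which is the known dual pose of the marker with respect to the target and therefore lies in $\dquatsu$ by assumption. The second block involves $\dualq{\target}{\camera}\dualq{\marker}{\target}$; because the product of two unit dual quaternions is a unit dual quaternion (this is the usual pose composition, equal to $\dualq{\marker}{\camera}$ by the chaining identity noted in Section \ref{sec:quaternion}), this factor is also in $\dquatsu$. Corollary \ref{cor:dualqlr} then guarantees each diagonal block is full rank, and the scalar $\tfrac{1}{2}$ does not affect rank.

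Combining these observations, Lemma \ref{lem:blockdiag} implies $\obsv{} \in \reals{16 \times 16}$ has rank $16$. Invoking the observability rank condition stated in Section \ref{sec:nlobsv} yields local weak observability of the dual quaternion system with a single fiducial marker measurement. Finally, since the state $\minstate$ is related to $x$ by a bijection, observability of $x$ transfers to observability of $\minstate$ by the earlier theorem.

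There is no real obstacle here; the work was done upstream in establishing Lemma \ref{lem:qlr}, Corollary \ref{cor:dualqlr}, and Lemma \ref{lem:blockdiag}, and in deriving the Jacobians of $\lie{h}{f}{0}$ and $\lie{h}{f}{1}$ in the convenient left/right multiplication-matrix form. The only point that deserves an explicit line in the write-up is the justification that $\dualq{\target}{\camera}\dualq{\marker}{\target}$ is a unit dual quaternion, since this is what ties the second diagonal block to Corollary \ref{cor:dualqlr} rather than leaving it as an arbitrary dual quaternion.
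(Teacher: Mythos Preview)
Your proposal is correct and follows essentially the same approach as the paper: both argue that the diagonal blocks $\qright{\dualq{\marker}{\target}}$ and $\tfrac{1}{2}\qright{\dualq{\target}{\camera}\dualq{\marker}{\target}}$ are full rank via Corollary~\ref{cor:dualqlr} (using that each argument is a unit dual quaternion) and then invoke Lemma~\ref{lem:blockdiag}. Your write-up is in fact slightly more careful, correctly calling $\obsv{}$ block \emph{lower triangular} rather than block diagonal and explicitly noting that the scalar $\tfrac{1}{2}$ does not affect rank.
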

\begin{proof}
    The product $\dualq{\target}{\camera} \dualq{\marker}{\target}$ is a unit dual quaternion, and from Corollary \ref{cor:dualqlr} the matrix $\frac{1}{2} \qright{\dualq{\target}{\camera} \dualq{\marker}{\target}}$ is therefore full rank. The matrix $\qright{\dualq{\marker}{\target}}$ is also full rank from Corollary \ref{cor:dualqlr}. By Lemma \ref{lem:blockdiag}, $\obsv{}$ is full rank because it is block diagonal, and its diagonal blocks are square and full rank.
\end{proof}

A noteworthy result of this derivation is that the dynamics are not needed for the system to be observable; the kinematics are sufficient. This result is expected because a time history of the relative pose measurement acquired by a single fiducial marker provides full-state information without a need for the dynamics. While the motivating scenario for this work was the relative motion between satellites, the results hold for any dynamics model. Additionally, there are no inertial terms that appear in the observability matrix, which aligns with the expected result that only relative pose information between the marker, camera, and target is necessary to fully and uniquely reconstruct the relative state.

\section{CONCLUSIONS}
Dual quaternions provide a promising framework for rigid body pose modeling and analysis. In this paper, we derived and applied properties of the dual quaternion Jacobian matrices to yield an observability analysis that required reasoning about simple block and triangular matrices. In future work, we will use results from this paper to investigate the observability of the same relative motion model in the presence of relative position and relative angles-only measurement models. The observability conditions found in future work will be leveraged in simulation and hardware experiments to demonstrate analytical results.



\section*{ACKNOWLEDGMENT}
The authors thank Nathan Powel for his guidance and insightful comments.

\bibliographystyle{IEEEtran}
\bibliography{IEEEabrv, references}

\end{document}